\newcommand{\orcid}[1]{\href{https://orcid.org/#1}{\includegraphics[scale=.05]{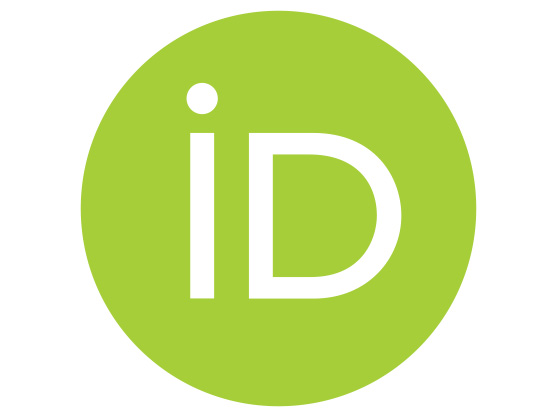}}}
\newtheorem{thm}{Theorem}
\newtheorem{cor}{Corollary}
\newtheorem{de}{Definition}
\newtheorem{prop}{Proposition}
\def\E{\mathbb{E}}
\def\P{\mathbb{P}}
\def\R{\mathbb{R}}
\def\N{\mathbb{N}}
\def\D{\mathrm{d}}
\def\Var{\mathbb{V}ar}
\def\Cov{\mathbb{C}ov}
\DeclareMathOperator*{\for}{\quad\text{for}\quad}
\newcommand\ind[1]{\mathbb{I}_{#1}}
\title[Random  Measures and ANOVA Models]{ Random  Measures, ANOVA Models\\ and Quantifying Uncertainty
in Randomized Controlled Trials}
\author[Bastian]{Caleb Deen Bastian \orcid{0000-0002-2027-7491}}%inserted by Authors
\affiliation{Princeton University, Program in  Applied and Computational Mathematics}%inserted by Authors
\address{Fine Hall       
Washington Road, Princeton, NJ  08544, USA  }%inserted by Authors
\email{cbastian@princeton.edu}%inserted by Authors
\author[Rabitz]{Herschel Rabitz \orcid{0000-0002-4433-6142}}%inserted by Authors
\affiliation{Princeton University, Department of Chemistry}
\address{Corner of Washington Road and Scholar Way, Princeton, NJ  08544, USA  }%inserted by Authors
\email{hrabitz@princeton.edu}%inserted by Authors
\author[Rempa{\l}a]{Grzegorz A. Rempa{\l}a\orcid{0000-0002-6307-4555}}%inserted by Authors
\affiliation{The Ohio State University, Division of Biostatistics and Department of Mathematics}
\address{1800 Cannons Dr, Columbus, OH 43210 , USA }%inserted by Authors
\email{rempala.3@osu.edul}%inserted by Authors
\subjclass[2010]{Primary: 60G57, Secondary: 62E10;}%inserted by Authors
\keywords{Random measure, global sensitivity analysis, randomized controlled trial}%inserted by Authors
\begin{document}
\vspace{-10ex}
\renewcommand{\thefootnote}{}
\footnote{\href{http://creativecommons.org/licenses/by/3.0/}{Licensed under a Creative Commons Attribution License (CC-BY)}}
%\Poczatek
%\Chapter
%\pagenumbering{roman}
\setcounter{page}{1} %%This command starts the numerations of pages
\selectlanguage{english}\Polskifalse
%\selectlanguage{polish}\Polskitrue

\begin{abstract} %inserted by Authors
This short paper introduces a novel approach to global sensitivity analysis of stochastic models, grounded in the variance-covariance structure of random variables derived from random  measures. The proposed methodology facilitates the application of information-theoretic rules for uncertainty quantification, offering several advantages. Specifically, the approach provides insights into the decomposition of variance within discrete subspaces, similar to the standard ANOVA analysis. To illustrate this point, the method is applied to datasets obtained from the analysis of randomized controlled trials  on evaluating the efficacy of the COVID-19 vaccine and assessing clinical endpoints in a lung cancer study.
\end{abstract}

%\tableofcontents
Global sensitivity analysis, as described  by Sobol's theory (see, for instance, \citealt{Sobol93}), is a powerful methodology employed in various fields, ranging from environmental modeling to complex systems analysis. Sobol's approach goes beyond traditional sensitivity analyses by systematically assessing the contributions of individual input variables and their interactions to the overall variability of model outputs. This method allows researchers to distinguish between the relative importance of different factors, aiding in the identification of critical parameters that significantly influence model behavior \citep{saltelli2008global,Saltelli10}.

At the core of Sobol's theory is the decomposition of the total variance of the model output into components attributed to individual input variables and their interactions, akin to the well-known ANOVA decompositions in the Gaussian linear models theory. By quantifying the proportion of variance associated with each factor, Sobol's sensitivity indices provide valuable insights into the sensitivity of the model to changes in specific parameters \citep{outside20,lu2018estimation}. This comprehensive understanding enables practitioners to prioritize resources, refine models, and focus on the most influential factors, ultimately improving the robustness and reliability of simulations and predictions in complex systems \citep{outside21,li12}. Sobol's variance-based global sensitivity analysis has become a cornerstone in the toolkit of modelers and analysts, offering a systematic and rigorous approach to unraveling the intricacies of complex systems. While these analyses often focus on individual input variables, the variances of random variables formed from {\em random measures} are less commonly studied, despite many potential advantages of their  usage. For a comprehensive review of the random measures theory, see for instance \cite{cinlar}. For some recent applications to complex biological systems see, for instance \cite{Bastian:2020tb,khudabukhsh2023projecting}.  

Classical sensitivity analysis is concerned with the (co)variances (uncertainties) of the effects, treating the number of data points (enrollees in the trial) as a fixed variable. The random measure setting generalizes the number of data points to be a random variable. Then, the uncertainty of the effects in random measures factors in contributions from both the individual-level of the enrollees and the randomness in the number of points. In most settings, the number of data points is fixed, which corresponds to a binomial random measure. This setting is under-dispersed, where, for example, counts in disjoint subsets of the measurement space are negatively correlated. A useful class of random measures for uncertainty quantification are orthogonal, which are de-correlated on disjoint subsets. The random measure paradigm is a natural extension and generalization of the current practice of sensitivity analysis of effects in randomized controlled trials. 

In this short note we consider  the variance of random variables formed from random  measures that admits decomposition in disjoint subsets, often referred to as `subspaces.' Global sensitivity analyses may be then  defined relative to this expansion, revealing information on the decompositions of the random measure variances in these subspaces. This approach allows for a more nuanced exploration of the impact of individual parameters on the overall variability of the model outcomes, offering insights into the specific contributions of different factors to the uncertainty in the model predictions. We show examples of application of such approach to analysis of clinical trials data. 

This article is organized as follows. First we give some minimal background (Section~2) on random measures. Then we define (Section~3) the random measure decomposition of variance and describe (Section~4) application to randomized controlled trials. We end (Section~5) with brief discussions and some conclusions. 

\section{Background} In this section we give a brief background on random counting measures, namely those possessing an orthogonal splitting property, and describe their mean and variance. 

%\subsection{Random Measure} 
Consider a random measure $N$ on the measurable space $(E,\mathscr{E})$, defined by a pair of deterministic probability measures $(\kappa,\nu)$, where $\kappa$ is a probability counting measure and $\nu$ is a measure on $(E,\mathscr{E})$. If $\nu$ is
discrete, then $N$ is referred to as a {\em random counting measure}.   Given the pair $(\kappa,\nu)$, $N$  is constructed as follows. Let $\kappa$ have mean $c$ and variance $\delta^2$.  Let $\{X_1,X_2,\dotsb\}$ be a collection of iid random variables having common law $\nu$ and independent of $\kappa$. Let counting variable $K$ follow the law $\kappa$ which is denoted as $K\sim\kappa$. The random measure $N=(\kappa,\nu)$ is formed from the random set $\mathbf{X}=\{X_1,\dotsb,X_K\}$ through \emph{stone throwing construction} \citep{cinlar,Bastian:2020tb} as \begin{equation}\label{eq:stones} N(A) =\int_EN(\D x)\ind{A}(x)= \sum_i^K\ind{A}(X_i)\for A\in\mathscr{E},\end{equation} where $\ind{A}$ is the indicator or set function. This random measure is also known in the literature as a {\em mixed binomial process} \citep{kallenberg02}

Let $\mathscr{E}_{\ge0}$ be the collection of non-negative $\mathscr{E}$-measurable functions. Take $f$ in $\mathscr{E}_{\ge0}$ and let $Nf$ be the random variable formed as \[Nf =\int_E N(\D x)f(x) = \sum_i^Kf\circ X_i.\] The mean and variance of $Nf$ are \[\E Nf = c\nu f,\quad \Var Nf = c\nu f^2 + (\delta^2-c)(\nu f)^2.\] The covariance of $f$ and $g$ in $\mathscr{E}_{\ge0}$ is given by \[\Cov(Nf,Ng) = c\nu(fg)+(\delta^2-c)\nu f\nu g.\] 
When $\kappa$ is a Poisson distribution then the random measure $N$ is referred to as a {\em Poisson random measure}. Note that in this case $\delta^2=c$ and the covariance formula simplifies. This leads to the following  key concept in the theory of  random measures.

\begin{de}[Orthogonality] Let $N$ be a random measure on $(E,\mathscr{E})$. It is said to be \emph{orthogonal} if $N(A),\dotsb,N(B)$ are uncorrelated for all choices of finite many disjoint sets $A,\dotsb,B$ in $\mathscr{E}$. 
\end{de}

Note that orthogonality conveys an orthogonal splitting property \[\Var(N(A)+\dotsb+N(B)) = \Var N(A) +\dotsb+\Var N(B)\] for disjoint  $A,\dotsb,B\text{ in }\mathscr{E}.$  When $\delta^2-c=0$, the random measure is orthogonal and \[\Cov(Nf,Ng)=c\,\nu(fg).\]

\begin{prop}[Orthogonality Condition] The random measure $N=(\kappa,\nu)$ on $(E,\mathscr{E})$ is orthogonal iff $\delta^2-c=0$.\end{prop}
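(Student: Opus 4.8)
The plan is to read off both implications directly from the covariance identity recorded in the background, namely
\[
\Cov(Nf,Ng) = c\,\nu(fg) + (\delta^2-c)\,\nu f\,\nu g,\qquad f,g\in\mathscr{E}_{\ge0},
\]
by specializing $f$ and $g$ to indicator functions of disjoint sets.

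For the \emph{if} direction, suppose $\delta^2-c=0$. Then the identity collapses to $\Cov(Nf,Ng)=c\,\nu(fg)$. Given finitely many pairwise disjoint sets $A,\dotsb,B$ in $\mathscr{E}$, apply this with $f=\ind{A}$ and $g=\ind{B}$ for any two of them: since $A\cap B=\emptyset$ we have $\ind{A}\ind{B}=\ind{A\cap B}=0$, so $\Cov(N(A),N(B))=c\,\nu(\emptyset)=0$. Hence every pairwise covariance vanishes and $N$ is orthogonal by definition.

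For the \emph{only if} direction, assume $N$ is orthogonal. Pick two disjoint sets $A,B$ in $\mathscr{E}$ with $\nu(A)>0$ and $\nu(B)>0$; such a pair exists provided $\nu$ is not concentrated on a single atom, which we treat as part of the standing non-degeneracy of the model. Orthogonality then gives
\[
0=\Cov(N(A),N(B)) = c\,\nu(A\cap B) + (\delta^2-c)\,\nu(A)\nu(B) = (\delta^2-c)\,\nu(A)\nu(B),
\]
and dividing by the positive factor $\nu(A)\nu(B)$ yields $\delta^2-c=0$.

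The arithmetic is immediate once the covariance formula is granted; the only point requiring care is the reverse implication, where one must exhibit two disjoint sets of strictly positive $\nu$-measure. This is why the statement is naturally read under a mild non-degeneracy hypothesis on $\nu$ (at least two atoms, or more generally $\nu$ not a Dirac mass), and I would flag this explicitly: in the excluded degenerate case every pair of disjoint sets already has zero covariance, so $N$ is orthogonal regardless of the value of $\delta^2-c$ and the equivalence becomes vacuous.
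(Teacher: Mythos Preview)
Your argument is correct and, in fact, more complete than what the paper offers. The paper does not supply a proof of this proposition at all: it simply records, just before the statement, that when $\delta^2-c=0$ the covariance collapses to $c\,\nu(fg)$, which implicitly handles the ``if'' direction, and then asserts the full equivalence without further justification. Your derivation of both directions from the covariance identity $\Cov(Nf,Ng)=c\,\nu(fg)+(\delta^2-c)\,\nu f\,\nu g$ by specializing to indicators of disjoint sets is exactly the natural route and matches the spirit of the paper's informal remark.

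Your explicit flag on the ``only if'' direction---that one needs two disjoint sets of positive $\nu$-measure, hence $\nu$ not a Dirac mass---is a genuine point the paper glosses over. It is worth keeping: as you note, if $\nu$ is a point mass then $N$ is trivially orthogonal for any $\kappa$, so the equivalence fails in that degenerate case. This is not a gap in your proof but a clarification of the proposition's scope that the paper leaves implicit.
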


\subsection{Examples of probability counting measures} For illustration we present several examples of $\kappa$ probability counting measure often utilized for  constructing a random 
measure $N$.  The key quantity of $\kappa$ is $\delta^2-c$ that  determines the correlative structure of $N$. As already noted, $N$ is orthogonal iff  $\delta^2-c=0$. In Table~\ref{tab:1}, we 
provide examples of common counting distributions  and highlight their properties. We present pairs of counting distributions exhibiting negative, zero, and positive correlations. We also 
specify whether the random measures are closed under restriction to subspaces and include some limiting relations. Note that the  orthogonal $N$ distributions  offer two choices in the table 
below, namely Poisson and Orthogonal Die (for the definition of the latter see the table caption and \citealt{bastian2020orthogonal}). It  appears the Orthogonal Die  offers a convenient alternative to the Poisson  whenever there is a theoretical need for bounded support, such as a finite bound on 
$K$ the number of points (stones) in \eqref{eq:stones}. Otherwise, the Poisson distribution is often preferred. We also note that the Orthogonal Die random measure has a Poisson limit, so in many problems with large $n$ and $m$ the distinction is stochastically negligible. 

\begin{table}[h!]
\begin{center}
\caption{Examples of counting measures. For the Orthogonal Die$(m,n)$ distribution  $\kappa$ is defined as the discrete uniform distribution on the set of consecutive integers 
$\{m,\dotsb,n\}$ for values of $m$ and $n$ satisfying mean-variance equality. For Zeta($s$) $\kappa$ is defined as  $P(\kappa=k)=\zeta(s)^{-1}k^{-s}$, where 
$\zeta(s)$ is the Riemann zeta function.}
\begin{tabular}{lclcl}
\toprule
Name & Support & $\delta^2-c$ & Closure & Limit(s)\\\midrule
Dirac$(c)$& $\{c\}$ & $-c$ & No &  \\
Binomial$(n,p)$ & $\{0,\dotsb,n\}$ & $-np^2$ & Yes & Dirac,  \\
 &  &  &  & Poisson \\
Poisson$(c)$ & $\N_{\ge0}$ & 0 & Yes &\\
Orthogonal Die$(m,n)$ & $\{m,\dotsb,n\}$ & 0 & No & Poisson \\
Negative-binomial$(r,p)$ & $\N_{\ge0}$ & $r(\frac{p}{1-p})^2$ & Yes & Poisson\\
Zeta$(s)$ for $s>2$& $\N_{>0}$ & $\frac{\zeta (s-1) \zeta (s+1)}{\zeta (s+1)^2}$  & No &\\
&  & $\frac{-\zeta (s) (\zeta (s)+\zeta (s+1))}{\zeta (s+1)^2}$  &  &\\
\bottomrule
\end{tabular}\label{tab:1}
\end{center}
\end{table}

\FloatBarrier

\subsection{Sobol systems} Consider a probability space $(E,\mathscr{E},\mu)$. Assume that $E$ is $n$-dimensional and that $\mu$ is a product probability measure. Let $f$ be a square-integrable function, i.e., in $L^2(E,\mathscr{E},\mu)$. The Sobol system representation, also known as high-dimensional model representation, of $f$, is given in terms of orthogonal component functions $\{f_u(x_u): x\subseteq\{1,\dotsb,n\}\}$ of increasing dimensions, \[f(x_1,\dotsb,x_n)=f_0+\sum_if_i(x_i)+\sum_{i<j}f_{ij}(x_i,x_j)+\dotsb+f_{1\dotsb n}(x_1,\dotsb,x_n)\] for $(x_1,\dotsb,x_n)\in E$ such that the variance decomposes as \[\Var f = \sum_i\Var f_i + \sum_{i<j}\Var f_{ij}+\dotsb+\Var f_{1\dotsb n},\] normalized as sensitivity indices \[\mathbb{S}_u = \frac{\Var f_u}{\Var f}\for u\subseteq\{1,\dotsb,n\}.\]

\section{Global sensitivity analysis} We give a brief description of analysis of variance through random measures, associated sensitivity indices, and describe some probability counting measures of interest. 

\subsection{Analysis of variance}\label{sec:var}
  
The variance of model on the full space $(E,\mathscr{E})$ is decomposed into the variance-covariance structure on a partition, obtained through the variance-covariance formulas. %Based on the class of mixed binomial processes we refer to the decomposition of random measure variance in terms of partitions the \emph{random measure analysis of variance}.

\begin{thm}[ANOVA] Let $N=(\kappa,\nu)$ be a random measure on $(E,\mathscr{E})$. Let $f\in\mathscr{E}_{\ge0}$ and consider disjoint partition $P=\{A,\dotsb,B\}$ of $E$. Then
 \begin{align}\Var Nf &= \sum_{D\in P} \Var Nf\ind{D} + \sum_{D'\ne D''\in P} \Cov(Nf\ind{D'},Nf\ind{D''})\\&=\sum_{D\in P}(c\nu(f^2\ind{D})+(\delta^2-c)(\nu(f\ind{D}))^2) + \sum_{D'\ne D''\in P} (\delta^2-c)\nu(f\ind{D'})\nu(f\ind{D''}).\nonumber\end{align} 
\end{thm}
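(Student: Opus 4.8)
The plan is to reduce the claim to two structural facts --- linearity of the map $g\mapsto Ng$ and bilinearity of covariance --- and then to substitute the variance and covariance formulas recorded in Section~2.

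First I would decompose $f$ along the partition. Since $P=\{A,\dots,B\}$ partitions $E$, the indicators sum to the constant $1$, so $f=\sum_{D\in P}f\ind{D}$ pointwise on $E$, and each summand $f\ind{D}$ again lies in $\mathscr{E}_{\ge0}$. Applying the stone-throwing representation $Ng=\int_E N(\D x)\,g(x)=\sum_i^K g\circ X_i$ term by term --- a finite sum of finitely many terms, so no exchange of limits is needed --- gives $Nf=\sum_{D\in P}Nf\ind{D}$. Writing $Y_D:=Nf\ind{D}$, the elementary identity $\Var\!\big(\sum_{D\in P}Y_D\big)=\sum_{D\in P}\Var Y_D+\sum_{D'\ne D''\in P}\Cov(Y_{D'},Y_{D''})$ --- the bilinear expansion of $\Cov(\sum_{D'}Y_{D'},\sum_{D''}Y_{D''})$ --- is exactly the first displayed equality.

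Next I would substitute the closed forms. For a diagonal term put $g=f\ind{D}$ in $\Var Ng=c\,\nu g^2+(\delta^2-c)(\nu g)^2$; since $\ind{D}^2=\ind{D}$ we have $g^2=f^2\ind{D}$, whence $\Var Nf\ind{D}=c\,\nu(f^2\ind{D})+(\delta^2-c)(\nu(f\ind{D}))^2$. For an off-diagonal term put $g=f\ind{D'}$, $h=f\ind{D''}$ in $\Cov(Ng,Nh)=c\,\nu(gh)+(\delta^2-c)\,\nu g\,\nu h$; disjointness of $D'$ and $D''$ gives $gh=f^2\ind{D'}\ind{D''}=f^2\ind{D'\cap D''}=0$, so $\nu(gh)=0$ and the covariance collapses to $(\delta^2-c)\,\nu(f\ind{D'})\,\nu(f\ind{D''})$. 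Substituting these into the first equality yields the second.

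For a finite partition, as the notation $\{A,\dots,B\}$ suggests, this is purely algebraic, and the main obstacle is really only cosmetic: one must note the implicit integrability hypothesis that makes the Section~2 formulas meaningful, namely $\nu(f^2\ind{D})<\infty$ on each cell (equivalently $f\in L^2(\nu)$, since the cells exhaust $E$), with the convention that both sides may be $+\infty$ simultaneously otherwise. The one genuinely non-bookkeeping step would arise if one wished to allow a countably infinite partition: then $Nf=\sum_{D\in P}Nf\ind{D}$ has to be realized as a limit --- monotone convergence handles the sum, since the terms are nonnegative --- and the interchange of $\Var$ with that limit must be justified, e.g. by checking that the tails $\sum_{D\notin F}Nf\ind{D}$ vanish in mean square. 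That is the only point I would expect to require care.
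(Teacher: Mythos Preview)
Your proposal is correct and follows essentially the same approach as the paper: decompose $f$ along the partition, expand $\Var Nf$ bilinearly, and then substitute the mixed binomial process variance/covariance formulas, using disjointness to kill the $\nu(gh)$ cross term. Your version is simply more explicit --- the paper's proof is a one-sentence sketch of exactly these steps --- and your remarks on integrability and the countable-partition case go beyond what the paper addresses.
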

\begin{proof} Noting that $f=f\ind{A}+\dotsb+f\ind{B}$ and that disjointedness implies that $\Cov(Nf_a,Nf_b)=(\delta^2-c)\nu f_a\nu f_b$, we have the decomposition using the variance and covariance formulas of the mixed binomial process.
\end{proof}
%\alert{GR: define binomial process}
\begin{cor}[Orthogonal ANOVA] Let $N$ be orthogonal. Then \[\Var Nf = c\nu(f^2) = c\sum_{D\in P}\nu(f^2\ind{D}).\] 
\end{cor}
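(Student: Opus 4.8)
The plan is to specialize the ANOVA theorem to the orthogonal case and then collapse the resulting sum. First I would invoke the Orthogonality Condition proposition, which states that $N=(\kappa,\nu)$ is orthogonal if and only if $\delta^2-c=0$. Substituting $\delta^2-c=0$ into the second line of the ANOVA decomposition, every term carrying the factor $(\delta^2-c)$ is killed: both the contributions $(\delta^2-c)(\nu(f\ind{D}))^2$ inside the diagonal variances and the entire off-diagonal double sum $\sum_{D'\ne D''\in P}(\delta^2-c)\nu(f\ind{D'})\nu(f\ind{D''})$. What remains is $\Var Nf=\sum_{D\in P}c\,\nu(f^2\ind{D})=c\sum_{D\in P}\nu(f^2\ind{D})$.

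Second, I would identify this sum with $c\,\nu(f^2)$. Since $P=\{A,\dotsb,B\}$ is a (finite) disjoint partition of $E$, the indicators satisfy $\sum_{D\in P}\ind{D}=\ind{E}$ pointwise, so that $f^2=\sum_{D\in P}f^2\ind{D}$ as a pointwise identity of non-negative $\mathscr{E}$-measurable functions. Applying $\nu$ and using (finite) additivity of the integral then gives $\nu(f^2)=\sum_{D\in P}\nu(f^2\ind{D})$, which closes the chain of equalities asserted in the corollary.

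Essentially the same computation can be carried out without citing the theorem: orthogonality directly yields $\Cov(Nf\ind{D'},Nf\ind{D''})=0$ for $D'\ne D''$ and $\Var Nf\ind{D}=c\,\nu(f^2\ind{D})$ from the variance formula of the mixed binomial process with $\delta^2=c$, after which one simply sums over $D\in P$. I do not expect any genuine obstacle here; the only point deserving a word of care is the interchange of $\nu$-integration with the sum over the partition, which is immediate when $P$ is finite and in any case justified by non-negativity of $f^2$ together with monotone convergence.
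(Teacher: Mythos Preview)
Your proposal is correct and is exactly the argument the paper intends: the corollary is stated without proof immediately after the ANOVA theorem, so the paper's implicit derivation is precisely the specialization $\delta^2-c=0$ you carry out, followed by summing $\nu(f^2\ind{D})$ over the partition to recover $\nu(f^2)$.
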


\subsection{Sensitivity indices}
Normalizing by the overall variance, we retrieve the sensitivity indices which  are key in Sobol's theory of global sensitivity analysis \citep{saltelli2008global}.

\begin{samepage}
\begin{de}[Sensitivity indices of $Nf$] Conider the random measure $N$ and disjoint partition $P=\{A,\dotsb,B\}$ of $E$. The structural sensitivity index of $Nf$ relative to partition $P$ is defined as \begin{equation}\mathbb{S}_D^a\equiv\frac{\Var Nf\ind{D}}{\Var Nf}\for D\in P.\end{equation} The correlative sensitivity index is defined as \begin{equation}\mathbb{S}^b_D\equiv\sum_{F\in P: F\ne D}\frac{\Cov(Nf\ind{D},Nf\ind{F})}{\Var Nf}\for D\in P,\end{equation} where \begin{equation}1 = \sum_{D\in P}(\mathbb{S}^a_D+\mathbb{S}^b_D)= \mathbb{S}^a+\mathbb{S}^b.\end{equation}  This yields the set of sensitivity indices $\{(\mathbb{S}^a_D,\mathbb{S}^b_D): D\in P\}$. 
\end{de}
\end{samepage}

These sensitivity indices indicate the contributions of a specific partition element to the variance of the random measure in the partitions. If $c=\delta^2$ for $\kappa$, as is  the case for    Poisson \citep{Bastian:2020tb} or other orthogonal random measures, then $\mathbb{S}^a=1$ and $\mathbb{S}^b=0$, so $(\mathbb{S}^a_D)$ is a probability vector, conveying a distribution of uncertainty on the partition. Otherwise, the vector is positively or negatively defective and hence looses a probabilistic interpretation.

%When $\mathbb{S}^a=1$, then the vector $(\mathbb{S}^a_d)$ encodes the distribution of the number of points in $\{A,\dotsb,B\}$.

A common setting is the Dirac probability counting measure $\kappa=\delta_n$ for some number $n\in\N_{>0}$ in the binomial process. For the binomial process, \begin{equation}\label{eq:binsa}\mathbb{S}^a = \sum_{D\in P}\frac{\Var Nf_d}{\Var Nf}=\sum_{D\in P}\frac{\Var f_d}{\Var f}>1\end{equation} and  \begin{equation}\label{eq:binsb}\mathbb{S}^b = \sum_{D\in P}\sum_{D'\in P: D'\ne D}\frac{-\nu(f\ind{D})\nu(f\ind{D'})}{\Var f}<0.\end{equation} 
 
Unlike the example above, the orthogonal random measures produce variance-based normalized sensitivity indices in the range of [0,1]. This enables us to introduce the   concept of {\em sensitivity probability measures}.

 \begin{samepage}
\begin{de}[Sensitivity probability measure]For orthogonal random measure $N$ and function  $f$ in 
$\mathscr{E}_{\ge0}$, the sensitivity probability measure $\mathbb{S}$ on $(E,\mathscr{E})$ is defined by \begin{equation}\label{eq:sdx}\mathbb{S}(\D x) \equiv \nu(\D x)f^2(x)/\nu f^2\end{equation} such 
that \[\mathbb{S}(D)\equiv\int_D\mathbb{S}(\D x)\for D\in\mathscr{E}.\] Assume now that $E=\bigtimes_i E_i$. For variable subset $u\subseteq\{1,\dotsb\}$, denoting $E_{-u}\equiv\bigtimes_{i\in\{1,\dotsb\}: i\notin u}E_i$, 
the marginal sensitivity probability measure $\mathbb{S}_u$  is defined as \begin{equation}\label{eq:sdxi}\mathbb{S}_u(\D x_u) \equiv \int_{E_{-u}}\mathbb{S}(\D x).\end{equation} %$\mathbb{S}^a_u$ defines a projection operator $P_u$ as \[\mathbb{S}_u^a(x_u) = P_u\mathbb{S}^a(x) = \int_{E_{-u}}\D x_{-u}\mathbb{S}^a(x)\]
\end{de}
\end{samepage}

The probability measure $\mathbb{S}$ on $(E,\mathscr{E})$ reflects the distribution of uncertainty of the random variable $Nf$ on $(E,\mathscr{E})$. At times, it is beneficial to consider uncertainty in relation to a partition $P$ of the space $E$. For any such partition $P$, the  quantification of uncertainty with respect to the partition can be obtained using the entropy function
\begin{equation}\label{eq:H}
    H_P(\mathbb{S})= - \sum_{A\in P} \mathbb{S}(A) \log(\mathbb{S}(A)), \end{equation} where we
    apply the usual convention  $0\cdot\log 0 =0$. Note that $H_P$ is non-negative and bounded by $\log(|P|)$.

%Another entropy function is that of the Dirichlet distribution with positive concentration vector $\alpha=(\alpha_A: A\in P)$. We set \begin{equation}\label{eq:alpha}\alpha_A = |P|\,\mathbb{S}(A)\end{equation} such that the average concentration is unity. This is because the entropy of the Dirichlet distribution is maximized with unit concentrations. Let $n=|P|$ such that, in view of \eqref{eq:alpha}, $\alpha_0 = \sum_{A\in P}\alpha_A=|P|=n$. Then the entropy function is given by \begin{align*}H_P(\mathbb{S}) &= \log(\frac{\prod_{A\in P}\Gamma(\alpha_A)}{\Gamma(\alpha_0)}) + (\alpha_0-|P|)\psi(\alpha_0)- \sum_{A\in P}(\alpha_A-1)\psi(\alpha_A)\\&=\log(\beta(\alpha)) - \sum_{A\in P}(\alpha_A-1)\psi(\alpha_A).\end{align*} Note that $H_P$ is  non-positive and unbounded. 
%
%

\section{Randomized controlled trials}\label{ex:rct}

In the context of discussion from previous sections, we present now a framework for analysis of randomized controlled trials using random measures. In this framework the  normalized sensitivity indices  for orthogonal measures  may be interpreted as probabilities, leading to sensitivity probability measures. As we see below such measures may then provide insight into the distribution of uncertainty  across groups by using the information-theoretic uncertainty quantification.

\subsection{Random measure representation}
A randomized controlled trial (RCT) is a study where independent subjects are independently randomly assigned to different groups, e.g., control and treatment groups, and subject measurements are recorded. Thus, a randomized controlled trial is a  realization of a random  measure. 

Let $N=(\kappa,\nu)$ be a random counting measure on a clinical trial design space $(E,\mathscr{E})$, where $\kappa$ is a counting distribution with mean $c$ and variance $\delta^2$ and $\nu$ is a probability measure on $(E,\mathscr{E})$. In the simplest case, $E=\{C,T\}$ contains control and treatment groups with equal probability, i.e. $\nu\{C\}=\nu\{T\}=1/2$. The enrollees (trial participants) are  a collection of independent $E$-valued random variables $\mathbf{X}=\{X_i\}$ and their measurements  are a collection $\mathbf{Y}=\{Y_i\}$ in measurement space $(F,\mathscr{F})$ according to transition probability kernel $Q$ from $(E,\mathscr{E})$ into $(F,\mathscr{F})$, i.e. $Y_i\sim Q(X_i,\cdot)$, e.g., $(F,\mathscr{F})=(\R_{\ge0},\mathscr{B}_{\R_{\ge0}})$. The  random measure associated with the collection $(\mathbf{X},\mathbf{Y})$ is $M=(\kappa,\nu\times Q)$ defined on $(E\times F,\mathscr{E}\otimes\mathscr{F})$   by stone throwing construction as \[Mf =\int_{E\times F}M(\D x,\D y)f(x,y)= \sum_i^K f\circ (X_i,Y_i)\for f\in(\mathscr{E}\otimes\mathscr{F})_{\ge0}.\] 

Define disjoint functions $f_T(x,y)=\ind{\{T\}}(x)y$ and $f_C(x,y)=\ind{\{C\}}(x)y$. For general random counting measures $M$, the random variables $Mf_T$ and $Mf_C$ are correlated, where \[\Cov(Mf_T,Mf_C) = (\delta^2-c)(\nu\times Q)f_T(\nu\times Q)f_C,\] with mean \[\E Mf_T = c(\nu\times Q)f_T\] and variance \[\Var Mf_T = c(\nu\times Q)f_T^2 + (\delta^2-c)((\nu\times Q)f_T)^2.\] The variance of the random measure describes its uncertainty. We have mean \[(\nu\times Q)f_T = \int_E\nu(\D x)\ind{\{T\}}(x)\int_FQ(x,\D y) y=\nu\{T\}\int_F Q(\{T\},\D y)y=\nu\{T\}c_T\] and second moment
\begin{align} (\nu\times Q)f_T^2  & = \int_E\nu(\D x)\ind{\{T\}}(x)\int_FQ(x,\D y) y^2\\ \nonumber & =\nu\{T\}\int_F Q(\{T\},\D y)y^2=\nu\{T\}(c_T^2+\delta_T^2). \end{align}

The null hypothesis is that group means are the same, $(\nu\times Q)f_T=(\nu\times Q)f_C$, thus depending on the mean measure. This is equivalent to $\E Mf_T = \E Mf_C$. The typical setting is Dirac probability counting measure $\kappa=\text{Dirac}(c)$ for $c\in\N_{>0}$. The Dirac random measure has minimum variance, $\delta^2=0$. The covariance is negative due to the fixed sample size, \[\Cov(Mf_T,Mf_C) = -\frac{c}{4}c_Tc_C\] and \[\Cov(\frac{1}{c}Mf_T,\frac{1}{c}Mf_C) = -\frac{1}{4c}c_Tc_C\] decaying to zero as $c\rightarrow\infty$. 

We note that if  the random measure $N$ is orthogonal (for instance, it is a Poisson random measure with the Poisson counting distribution as listed in Table~\ref{tab:1}), then we have the following properties: (i) the covariance is zero for any mean sample size $c\in(0,\infty)$, (ii) the variance is given by \[\Var Mf_T = c(\nu\times Q)f_T^2\] and (iii) the normalized variances form a sensitivity distribution \begin{equation}\label{eq:sT}
    \mathbb{S}_T = \frac{(\nu\times Q)f_T^2}{(\nu\times Q)f^2} = \frac{\nu\{T\}(c_T^2+\delta_T^2)}{\nu\{T\}(c_T^2+\delta^2_T) + \nu\{C\}(c_C^2 + \delta_C^2)},\end{equation} where $\mathbb{S}_T+\mathbb{S}_C=1$. 
For \( \mathbb{S}=(\mathbb{S}_T,\mathbb{S}_C) \) the entropy function is $H(\mathbb{S})=-\mathbb{S}_T\log(\mathbb{S}_T)-\mathbb{S}_C\log(\mathbb{S}_C)$.  
%\alert{Entropy def here}
%We take $\nu\{T\}=\nu\{C\}=1/2$, $c_T = a c_C$ and $\delta_T^2 = b\delta_C^2$ for $a,b\in(0,\infty)$ and plot $H(\mathbb{S})$ as a function of $a$ and $b$ in Figure~\ref{fig:entcont} where $c_C=\delta_C^2=1$. 
%
%\begin{figure}[h!]
%\centering
%\includegraphics[width=4in]{entropycont.pdf}
%\caption{Entropy of $\mathbb{S}$ for a randomized controlled trial with non-negative measurements using an orthogonal random measure}\label{fig:entcont}
%\end{figure}
%
%\FloatBarrier

\subsection{Testing vaccine efficacy}

For example, consider a randomized controlled trial testing vaccine efficacy. The measurement is an indicator of infection  of the underlying disease being vaccinated against over the period of the trial. Then $c_T = \P(T)$ and $c_C=\P(C)$ are the probabilities of infection based on group assignment. The standard measure of vaccine efficacy is usually (see, e.g.,\citealt{andersson2012stochastic}) \begin{equation}\label{eq:Eff}
    \mathrm{Eff}=1-\frac{\P(T)}{\P(C)}.\end{equation}
    The corresponding empirical risk measure of uncertainty about the vaccine efficacy  may be  then defined as
\begin{equation}\label{eq:unc} \mathrm{Unc}(\mathrm{Eff}) = 2\min(1 - \mathrm{Eff}, \mathrm{Eff}), \end{equation}
which reflects the fact that, assuming $\mathrm{Eff} \in [0, 1]$, the uncertainty should be lowest near the boundaries of the interval and highest at the midpoint, $\mathrm{Eff} = 0.5$.
    We will consider an alternative to \eqref{eq:unc} based on \( \mathbb{S}\) and  \(H(\mathbb{S})\). To this end 
note that  \[\Var Mf_T = c\nu\{T\}\P(T) + (\delta^2-c)\nu^2\{T\}\P^2(T)\] and the covariance is given by \[\Cov(Mf_T,Mf_C) = (\delta^2-c)\nu\{T\}\nu\{C\}\P(T)\P(C).\] 

 For orthogonal $M$  (say for $\kappa$ Poisson distributed) the variance is \[\Var Mf_T = c\nu\{T\}\P(T)\] and the sensitivity index in $T$ is \[\mathbb{S}_T = \frac{\nu\{T\}\P(T)}{\nu\{T\}\P(T)+\nu\{C\}\P(C)}.\] When $\nu\{T\}=\nu\{C\}=1/2$ and  $\P(T)\ll\P(C)$, then the uncertainty of the effects is dominated by the infections of the control group, $\mathbb{S}_T\ll\mathbb{S}_C$.  In this case $ \mathbb{S}_C\approx \mathrm{Eff}$ where $\mathrm{Eff}$ is given by \eqref{eq:Eff}. 

Clinical trials of highly efficacious vaccines against infectious diseases (i.e., those with high vaccine efficacy \( \mathrm{Eff} \)) are expected to exhibit low uncertainty, as measured by the entropy \( H(\mathbb{S}) \). This appears to be the case in the current study. To enable a more direct comparison with the uncertainty measure defined in equation~\eqref{eq:unc}, it is helpful to consider a scaled version of the entropy:
\begin{equation}\label{eq:H2}
H_2(\mathbb{S}) = \frac{H(\mathbb{S})}{\log(2)},
\end{equation}
which corresponds to computing the entropy using logarithms of base 2. This normalization aligns with the binary entropy function and allows the uncertainty to be interpreted in units of bits.

As our  example we consider the recent Moderna and Pfizer vaccines for COVID-19 \cite{moderna,pfizer}. For Moderna, there are approximately $n=30\,400$ enrollees, where infection was recorded in 5 cases in the vaccine group and 90 cases in the control group, giving $\mathrm{Eff}=17/18\approx 0.944$,  the sensitivity indices \[\mathbb{S}_C = \frac{18}{19}\approx 0.947,\quad \mathbb{S}_T=\frac{1}{19}\approx 0.053\] and the binary  entropy \[H_2(\mathbb{S})=\frac{18}{19}\log_2(19/18) + \frac{1}{19}\log_2(19)\approx 0.298.\] 
The approximate 95\% confidence intervals  for $\mathbb{S}_C$ and $H_2(\mathbb{S})$ are,  respectively, $(0.945,  0.950)$ and  $(0.287,  0.307)$.
%$(0.199 ,  0.213)$. 
They may be obtained  by simulating multiple times (here 10,000 times) from the  mixed binomial process with $Poisson(n)$ trials  and the probability of success $\mathbb{S}_C$. A slightly wider and  more  statistically appropriate (but also more complicated) approximation to this interval could be obtained by directly simulating from the random measure (using $\P(T)$ and $\P(C)$) and not the sensitivity measure (using $\mathbb{S}_C$). However, the difference appears small and largely irrelevant for the current discussion. 

For Pfizer, there are approximately $n=44\,000$ enrollees, with 8 cases in the vaccine group and 162 cases in the control group, giving $\mathrm{Eff}={77}/{81}\approx 0.951$, the  sensitivity indices \[\mathbb{S}_C = \frac{81}{85}\approx 0.953,\quad \mathbb{S}_T=\frac{4}{85}\approx 0.047\] and the  entropy \[H_2(\mathbb{S})=\frac{81}{85}\log_2(85/81) + \frac{4}{85}\log_2(85/4)\approx0.274.\]   
The approximate 95\% confidence intervals for $\mathbb{S}_C$ and $H(\mathbb{S})$ are, respectively, $(0.951 ,  0.955 )$ and $(0.265, 0.283)$
%(0.184,  0.196) 
obtained as above.
The values of the binary entropies for Moderna and Pfizer trials as measures of uncertainty of, respectively, 0.298 and 0.274  may be now compared with the corresponding risk-based uncertainty values   Unc(17/18)=0.111  and Unc(77/81)=0.099  based on $\mathrm{Eff} $ of and obtained from  \eqref{eq:unc}.  The more detailed comparison of  $\mathrm{Unc}(p)$  and  $H_2(p)$ functions is provided in Figure~\ref{fig:example}.  

\begin{figure}[htbp] %  figure placement: here, top, bottom, or page
   \centering
   \includegraphics[width=.7\linewidth]{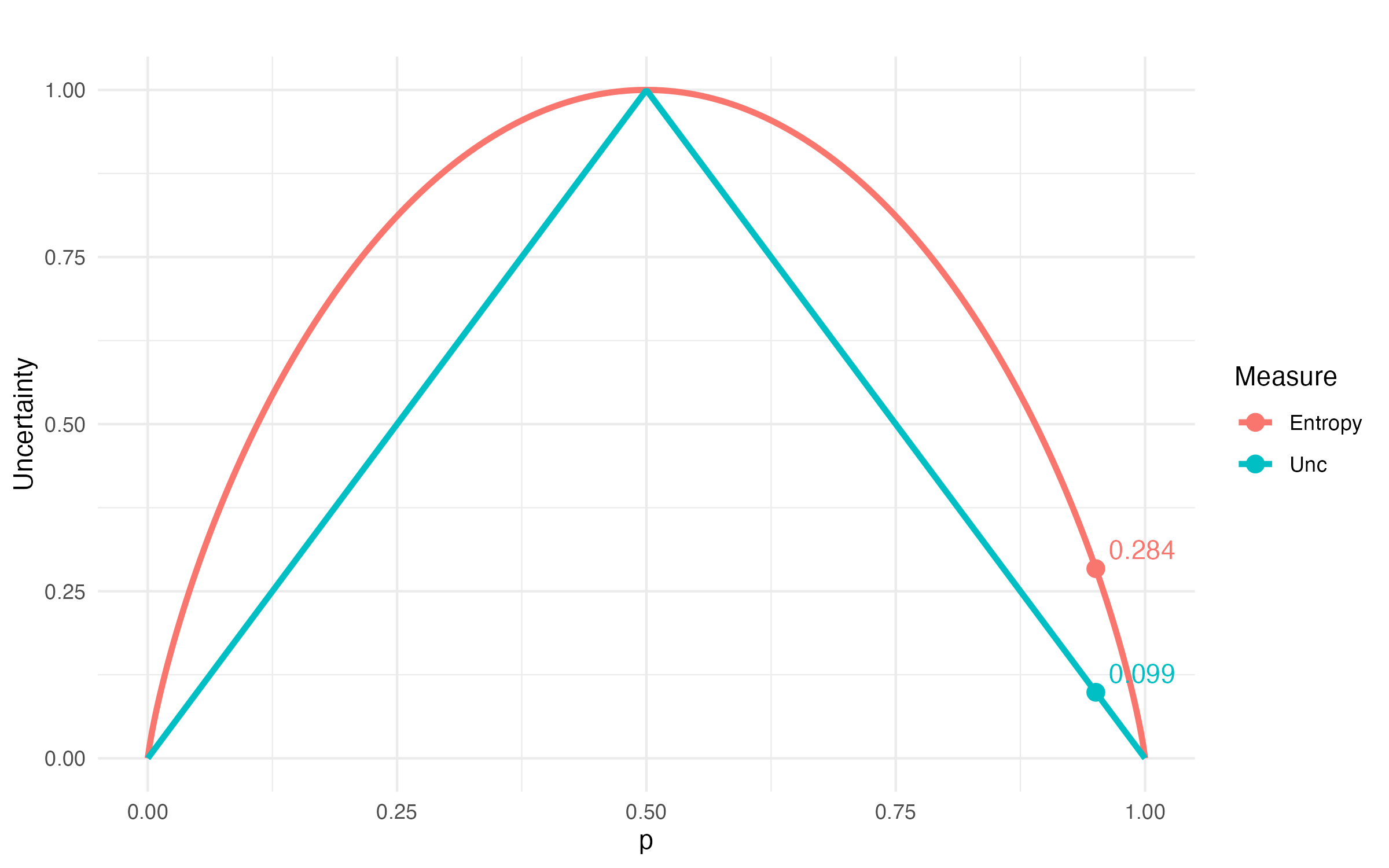} 
   \caption{The comparison of the uncertainty measures \( \mathrm{Unc}(\cdot) \) and the binary entropy \( H_2(\cdot) \) illustrates their differing behavior. For the purpose of comparison, we assume \( \mathrm{Eff} = \mathbb{S}_C = p \). As see from the graph, the binary entropy \( H_2(\cdot) \) is generally regarded as a more conservative measure of uncertainty as compared to \( \mathrm{Unc}(\cdot)\). Approximate numerical values for the Pfizer trial (see Table~\ref{tab:2}), corresponding to \( p = \frac{77}{81} \), are plotted for reference.}
   \label{fig:example}
\end{figure}

\subsection{Clinical end-points in a cancer study}
Recall  the sensitivity index provided in \eqref{eq:sT}. In this section, we will illustrate how the entropy-based analysis, conducted in the previous section, can be also employed to cluster qualitatively different information found in various types of endpoints within a clinical trial. This question becomes particularly relevant when comparing the multiple effects of the treatment with those of a control group.   This will be illustrated with a specific example. 

% \[\mathbb{S}_T = \frac{(\nu\times Q)f_T^2}{(\nu\times Q)f^2} = \frac{\nu\{T\}(c_T^2+\delta_T^2)}{\nu\{T\}(c_T^2+\delta^2_T) + \nu\{C\}(c_C^2 + \delta_C^2)}\] where $\mathbb{S}_T+\mathbb{S}_C=1$.
 
 In the clinical trial `A Study in Non-Small Cell Lung Cancer' by Eli Lilly and Company (identifier NCT01232452), various clinical end-points are measured \citep{lungcancer}. These are shown below in Table~\ref{tab:2}, with either 95\% confidence intervals or standard deviation, in months otherwise indicated, displayed along with the sensitivity indices, entropies, and uncertainties to three decimal places. For example, consider `progression-free survival' with treatment group averaging 5.22 months with 95\% confidence interval 4.24 to 6.74 months and the control group averaging 5.45 months with 95\% CI 3.88 to 6.05 months. Setting $\nu\{T\}=\nu\{C\}=1/2$, $c_T=5.22$, $\delta_T^2=(6.74-4.24)/4\approx0.637$, $c_C=5.45$, and $\delta_C^2=(6.05-3.88)/4\approx 0.553$, we obtain 
 \begin{equation}\label{eq:ST}  \mathbb{S}_T=\frac{\nu\{T\}(c_T^2+\delta_T^2)}{\nu\{T\}(c_T^2+\delta^2_T) + \nu\{C\}(c_C^2 + \delta_C^2)}\approx 0.48.\end{equation} Most of end-points are high entropy effects, that is, they generally have high uncertainty. The entropies cluster with three end-points near maximal entropy (progression-free survival, time to progressive disease, and change in tumor size), two end-points with less entropy (percentage of participants achieving an objective response and duration of response), and one end-point with much smaller entropy (time to worsening of symptoms). This suggests that response is moderately uncertain, accompanied with low uncertainty of time to worsening of symptoms, while the other endpoints are highly uncertain.   
% 
%  \begin{table}[h!]
% \begin{center}
% \begin{tabular}{p{2in}|cc|cc|c}
% \toprule
% End-point & Treatment & Control & $\mathbb{S}_T$ & $\mathbb{S}_C$ &$H_2(\mathbb{S})$ \\\midrule
% Progression-free survival & 5.45 (3.88 to 6.05) & 5.22 (4.24 to 6.74)& 0.520 & 0.480 & 0.998 \\
% Percentage of participants achieving an objective response & 37.9 (27.7 to 49.0)&30.6 (21.0 to 41.5) & 0.603 & 0.397 & 0.969 \\
% Duration of response & 4.90 (4.17 to 6.28) & 3.91 (2.92 to 6.41) & 0.602 & 0.398 & 0.969 \\
% Time to progressive disease & 6.05 (5.32 to 7.79) & 6.05 (4.93 to 7.89) & 0.499 & 0.501 & 0.999 \\
% Time to worsening of symptoms & 2.14 (1.54 to 2.99) & 4.21 (2.43 to 5.36) & 0.205 & 0.795 & 0.731 \\
% Change in tumor size & -23.88 (18.859) & -16.04 (26.143) & 0.489 & 0.511 & 0.999 \\
% \bottomrule
% \end{tabular}\caption{Treatment and control measured clinical end-points and computed sensitivity indices and entropies for the randomized controlled trial `A Study in Non-Small Cell Lung Lancer' (NCT01232452) \cite{lungcancer}. All time variables are reported in units of one month. The values of $\mathbb{S}_T$ and $H_2(\mathbb{S})$ are computed according to the formulas \eqref{eq:ST} and \eqref{eq:H2}, respectively.}\label{tab:2}
% \end{center}
% \end{table}

\begin{table}[h!]
\caption{Treatment and control measured clinical end-points and computed sensitivity indices and entropies for the randomized controlled trial \textit{A Study in Non-Small Cell Lung Cancer} (NCT01232452) \cite{lungcancer}. All time variables are reported in months. The values of $\mathbb{S}_T$ and $H_2(\mathbb{S})$ are computed according to \eqref{eq:ST} and \eqref{eq:H2}.}
\centering
\begin{tabular}{p{0.85in}|c|c|c|c|c}
\toprule
End-point & Treatment & Control & $\mathbb{S}_T$ & $\mathbb{S}_C$ & $H_2(\mathbb{S})$ \\
\midrule
Progression-free survival & 5.45 (3.88--6.05) & 5.22 (4.24--6.74) & 0.520 & 0.480 & 0.998 \\
Objective response (\%) & 37.9 (27.7--49.0) & 30.6 (21.0--41.5) & 0.603 & 0.397 & 0.969 \\
Duration of response & 4.90 (4.17--6.28) & 3.91 (2.92--6.41) & 0.602 & 0.398 & 0.969 \\
Time to progressive disease & 6.05 (5.32--7.79) & 6.05 (4.93--7.89) & 0.499 & 0.501 & 0.999 \\
Time to worsening symptoms & 2.14 (1.54--2.99) & 4.21 (2.43--5.36) & 0.205 & 0.795 & 0.731 \\
Change in tumor size & -23.88 (18.9) & -16.04 (26.1) & 0.489 & 0.511 & 0.999 \\
\bottomrule
\end{tabular}

\label{tab:2}
\end{table}

\FloatBarrier

\section{Discussion and conclusions} Random measures, as stochastic objects, have well-defined means and variances whenever they exist. The variance of a random measure encodes its uncertainty. A particularly useful subset of random measures is that of orthogonal random measures, which exhibit an orthogonal splitting property. This property allows for the definition of sensitivity probability measures, which encode the uncertainty associated with the action of the random measure on the risk function in the respective coordinates. Once such measures are available, their entropy and normalized entropy can be readily determined and used to quantify uncertainty and cluster various stochastic outcomes.

In this paper, we illustrate with examples that specific clinical trials, including those for vaccines, can be understood as instances of random counting measures. We also introduce random measures within the context of uncertainty related to observable clinical endpoints, especially pertinent in trials such as those centered around cancer treatment. We calculate the mean and covariance of the resulting random variables, derived from the integral action of the random measure on the function. We argue that in the realm of orthogonal random measures, sensitivity probability measures capture the distribution of uncertainty (variance) in the coordinates, incorporating, at a minimum, the group labels associated with clinical trials.

In the examples explored in this paper, we find that entropy-based analysis of sensitivity measures effectively captures uncertainty across clinical trials. In COVID-19 vaccine efficacy studies, minimal uncertainty is observed in the effects of the (Boolean) indicator functions for the groups. In contrast, a cancer clinical trial reveals substantial variability in the effects observed in measured clinical endpoints, aligning with the inherent challenge of extracting signals from data characterized by significant variability. Furthermore, the examples illustrate how the uncertainty measures we propose can support clinical decision-making by quantifying the robustness of outcomes across multiple endpoints. This capability is particularly valuable in complex clinical contexts, where decisions often hinge on a combination of clinical signals rather than a single metric. By offering a principled way to evaluate the stability of results across such outcomes, our framework may complement existing evaluation tools and enhance interpretability in high-stakes settings. In the context of COVID-19 vaccine trials, we also provide a comparison with a  classical risk-based uncertainty assessment method (Figure 1), further highlighting the distinct insights afforded by our entropy-based approach.

In summary, our investigation highlights the utility of random measures, particularly orthogonal random measures, in elucidating the uncertainty inherent in clinical trials. By introducing sensitivity probability measures and employing entropy-based analyses, we provide a comprehensive framework for understanding and quantifying uncertainty across diverse trial scenarios. The examples presented, ranging from COVID-19 vaccine efficacy studies to cancer clinical trials, underscore the adaptability of our approach to different medical contexts. As we continue to refine these methodologies, our findings indicate that mathematical theory of random measures may  contribute to the ongoing efforts to enhance the precision and reliability of clinical trial assessments, fostering advancements in evidence-based medicine.

%%% AUTHOR: optional acknowledgments here
\vspace{2ex}
\acknowledgments{GAR work was patially supported by the HEALMOD initiative at The Ohio State University}
%\authorcontributions{All authors equally contributed to the paper.}
%%%%%%%%%%%%%%%%%%%%%%%%%%%%%%%%%%%%%%%%%%
%\funding{This research received no external funding. }
%%%%%%%%%%%%%%%%%%%%%%%%%%%%%%%%%%%%%%%%%%
%\conflictsofinterest{The author declare no conflict of interest.} 
\vskip 0.5cm

% Please copy or paste your bibtex file here 
%If possible add DOI NUMBER to each item

\begin{filecontents}[overwrite]{\jobname.bib}

@book{saltelli2008global,
	author = {Saltelli, A. and Ratto, M. and Andres, T. and Campolongo, F. and Cariboni, J. and Gatelli, D. and Saisana, M. and Tarantola, S.},
	date-added = {2024-01-10 14:13:55 -0600},
	date-modified = {2024-01-10 14:15:35 -0600},
	publisher = {{John Wiley \& Sons}},
	title = {Global sensitivity analysis: the primer},
	year = {2008},
	doi = {10.1002/9780470725184}
}
@book{andersson2012stochastic,
	author = {Andersson, H. and Britton, T.},
	date-added = {2024-01-10 14:08:18 -0600},
	date-modified = {2024-01-10 14:16:46 -0600},
	isbn = {9781461211587},
	publisher = {Springer New York},
	series = {Lecture Notes in Statistics},
	title = {Stochastic Epidemic Models and Their Statistical Analysis},
	year = {2012},
	bdsk-url-1 = {https://books.google.com/books?id=hTX2BwAAQBAJ},
	doi = {10.1007/978-1-4612-1158-7}
}
@techreport{lungcancer,
	author = {Eli Lilly and Company},
	date-added = {2023-11-04 10:14:28 -0500},
	date-modified = {2023-11-04 10:16:11 -0500},
	institution = {National Library of Medicine},
	number = {NCT01232452},
	title = {A Study in Non-Small Cell Lung Cancer},
	year = {2019}}

@article{Bastian:2020tb,
	annote = {https://doi.org/10.1002/mma.6224},
	author = {Bastian, Caleb Deen and Rempala, Grzegorz A.},
	booktitle = {Mathematical Methods in the Applied Sciences},
	da = {2020/05/15},
	date = {2020/05/15},
	date-added = {2022-06-01 13:44:06 -0500},
	date-modified = {2022-06-01 13:45:10 -0500},
	doi = {10.1002/mma.6224},
	isbn = {0170-4214},
	journal = {Mathematical Methods in the Applied Sciences},
	journal1 = {Mathematical Methods in the Applied Sciences},
	journal2 = {Math Meth Appl Sci},
	keywords = {Laplace functional; Poisson-type (PT) distributions; random counting measure; stone throwing construction; strong invariance; thinning},
	m3 = {https://doi.org/10.1002/mma.6224},
	month = {2022/06/01},
	n2 = {We show that in a broad class of random counting measures, one may identify only three that are rescaled versions of themselves when restricted to a subspace. These are Poisson, binomial, and negative binomial random measures. We provide some simple examples of possible applications of such measures.},
	number = {7},
	pages = {4658--4668},
	publisher = {John Wiley \& Sons, Ltd},
	title = {{Throwing stones and collecting bones: Looking for Poisson-like random measures}},
	ty = {JOUR},
	volume = {43},
	year = {2020},
	year1 = {2020}
	}

@article{outside20,
	author = {T. Ziehn and K. J. Hughes and J. F. Griffiths and R. Porter and A. S. Tomlin},
	date-added = {2021-03-21 16:02:11 -0400},
	date-modified = {2021-03-21 16:02:11 -0400},
	doi = {10.1080/13647830902878398},
	eprint = {http://www.tandfonline.com/doi/pdf/10.1080/13647830902878398},
	journal = {Combustion Theory and Modelling},
	number = {4},
	pages = {589-605},
	title = {A global sensitivity study of cyclohexane oxidation under low temperature fuel-rich conditions using {HDMR} methods},
	volume = {13},
	year = {2009}
}

@article{outside21,
	author = {Ziehn, T. and Tomlin, A. S.},
	date-added = {2021-03-21 16:02:11 -0400},
	date-modified = {2021-03-21 16:02:11 -0400},
	doi = {10.1002/kin.20367},
	issn = {1097-4601},
	journal = {International Journal of Chemical Kinetics},
	number = {11},
	pages = {742--753},
	publisher = {Wiley Subscription Services, Inc., A Wiley Company},
	title = {A global sensitivity study of sulfur chemistry in a premixed methane flame model using {HDMR}},
	volume = {40},
	year = {2008}
	}

@article{li12,
	author = {Li, Genyuan and Rabitz, Herschel},
	date-added = {2021-03-21 16:02:11 -0400},
	date-modified = {2021-03-21 16:02:11 -0400},
	doi = {10.1007/s10910-011-9898-0},
	issn = {0259-9791},
	journal = {Journal of Mathematical Chemistry},
	keywords = {HDMR; Global sensitivity analysis; D-MORPH regression; Extended bases; Least-squares regression; Orthonormal polynomial},
	language = {English},
	number = {1},
	pages = {99-130},
	publisher = {Springer Netherlands},
	title = {General formulation of HDMR component functions with independent and correlated variables},
	volume = {50},
	year = {2012}
}

@article{Saltelli10,
	author = {Saltelli, Andrea and Annoni, Paola and Azzini, Ivano and Campolongo, Francesca and Ratto, Marco and Tarantola, Stefano},
	da = {2010/02/01/},
	date-added = {2021-03-21 16:02:11 -0400},
	date-modified = {2021-03-21 16:02:11 -0400},
	doi = {10.1016/j.cpc.2009.09.018},
	isbn = {0010-4655},
	journal = {Computer Physics Communications},
	number = {2},
	pages = {259--270},
	title = {Variance based sensitivity analysis of model output. {D}esign and estimator for the total sensitivity index},
	ty = {JOUR},
	volume = {181},
	year = {2010}
}

@techreport{moderna,
	author = {FDA},
	date-added = {2020-12-18 17:53:34 -0500},
	date-modified = {2020-12-18 17:54:57 -0500},
	institution = {Moderna},
	title = {{E}mergency {U}se {A}uthorization ({EUA}) for {M}oderna {COVID}-19 {V}accine},
	year = {2020}}

@techreport{pfizer,
	author = {FDA},
	date-added = {2020-12-18 17:50:28 -0500},
	date-modified = {2020-12-18 17:52:30 -0500},
	institution = {Pfizer},
	title = {{E}mergency {U}se {A}uthorization ({EUA}) for an {U}napproved {P}roduct {R}eview {M}emorandum: {P}fizer-{B}io{NT}ech {COVID}-19 {V}accine/ {BNT}162b2},
	year = {2020}
}

@Article{Sobol93,
  author        = {Sobol, Ilya M.},
  file          = {Sobol_1990aa.pdf},
  journal       = {Matematicheskoe Modelirovanie},
  keywords      = {uncertainty,sensitivity,sobol},
  langid        = {russian},
  note          = {Mathematical Modeling \& Computational Experiment (Engl.), 1993, 1, 407–414.},
  number        = {1},
  pages         = {112–-118},
  title         = {On sensitivity estimation for nonlinear mathematical models},
  volume        = {2},
  year          = {1990}
}

@article{bastian2020orthogonal,
	author = {Bastian, Caleb Deen and {Rempala}, Grzegorz A and Rabitz, Herschel},
	date-added = {2020-10-07 21:52:29 -0500},
	date-modified = {2024-05-20 08:42:52 -0500},
	journal = {The Annals of Applied Probability},
	keywords = {Mathematics - Probability, 60G57, 60G55, 60F05},
	title = {{Random measures and orthogonal dice}},
	year = 2024,
	doi = {10.48550/arXiv.2009.10503}
}

@article{lu2018estimation,
author = {Rong Lu and Danxin Wang and Min Wang and Grzegorz A. Rempala},
title = {Estimation of {S}obol's sensitivity indices under generalized linear models},
journal = {Communications in Statistics - Theory and Methods},
volume = {47},
number = {21},
pages = {5163--5195},
year = {2018},
publisher = {Taylor \& Francis},
doi = {10.1080/03610926.2017.1388397}

}
@article{khudabukhsh2023projecting,
title = {Projecting {COVID-19} cases and hospital burden in {O}hio},
journal = {Journal of Theoretical Biology},
volume = {561},
pages = {111404},
year = {2023},
issn = {0022-5193},
doi = {10.1016/j.jtbi.2022.111404},
author = {Wasiur R. KhudaBukhsh and Caleb Deen Bastian and Matthew Wascher and Colin Klaus and Saumya Yashmohini Sahai and Mark H. Weir and Eben Kenah and Elisabeth Root and Joseph H. Tien and Grzegorz A. Rempała},
keywords = {COVID-19, Dynamical Survival Analysis, SIR model, Prediction}
}
@book{kallenberg02,
  added-at = {2015-02-23T20:19:17.000+0100},
  author = {Kallenberg, Olav},
   doi = {10.1007/978-1-4757-4015-8},
  edition = {Second},
  interhash = {ffb70ba7323cf9bae5dd8b094824820d},
  intrahash = {905a4d28c5d86d23aaf21f62d0f34954},
  isbn = {0-387-95313-2},
  keywords = {book probability_theory reference},
  mrclass = {60-01},
  mrnumber = {1876169 (2002m:60002)},
  mrreviewer = {Klaus D. Schmidt},
  pages = {xx+638},
  publisher = {Springer-Verlag, New York},
  series = {Probability and its Applications (New York)},
  timestamp = {2015-02-23T20:19:17.000+0100},
  title = {Foundations of modern probability},
    year = 2002
}

@book{cinlar,
	author = {Erhan Cinlar},
	date-added = {2018-07-25 12:49:01 +0000},
	date-modified = {2018-07-25 12:49:01 +0000},
	publisher = {Springer-Verlag New York},
	title = {Probability and Stochastics},
	year = {2011},
	doi = {10.1007/978-0-387-87859-1}
}
\end{filecontents} 
\centerline{\bf\large References}
%\small
\bibliographystyle{abbrvnat}
\bibliography{\jobname}
%\bibliography{refs.bib}
\bigskip%\normalsize

\Koniec
\end{document}